\journal{Journal of Theoretical Biology}
\newtheorem{thm}{Theorem}
\begin{document}

\begin{frontmatter}

\ead{zhangjiang@bnu.edu.cn}
\title{Scaling Behaviors of Weighted Food Webs as Energy Transportation Networks}


\author{Jiang Zhang, Liangpeng Guo}

\address{Department of Systems Science, School of Management, Beijing Normal University, Beijing, 100875}

\begin{abstract}
Food webs can be regarded as energy transporting networks in which
the weight of each edge denotes the energy flux between two species.
By investigating 21 empirical weighted food webs as energy flow
networks, we found several ubiquitous scaling behaviors. Two random
variables $A_i$ and $C_i$ defined for each vertex $i$, representing
the total flux (also called vertex intensity) and total indirect
effect or energy store of $i$, were found to follow power law
distributions with the exponents $\alpha\approx 1.32$ and
$\beta\approx 1.33$, respectively. Another scaling behavior is the
power law relationship, $C_i\sim A_i^\eta$, where $\eta\approx
1.02$. This is known as the allometric scaling power law
relationship because $A_i$ can be treated as metabolism and $C_i$ as
the body mass of the sub-network rooted from the vertex $i$,
according to the algorithm presented in this paper. Finally, a
simple relationship among these power law exponents,
$\eta=(\alpha-1)/(\beta-1)$, was mathematically derived and tested
by the empirical food webs.
\end{abstract}

\begin{keyword}
Power Law \sep Allometric Scaling \sep Energy Flow Network



\end{keyword}

\end{frontmatter}


\section{Introduction}
Scientists look for universal patterns of complex systems because
such invariant features may help to unveil the principles of system
organization\citep{Waldrop1992}. Complex network studies can not
only provide a unique viewpoint of nature and society but also
reveal ubiquitous patterns, e.g., small world and scale free,
characteristic of various complex
systems\citep{watts1998,AlbertBarabasi2002}. However, ecological
studies have shown that binary food webs, which depict trophic
interactions in ecosystems, refuse to become part of the small world
and scale free networks family \citep{Montoya2002,Dunne2002}.
Although some common features, including "two degrees separation",
which means the very small average distance, are shared among food
webs \citep{Williams2002}, other meaningful attributes such as
degree distribution and clustering coefficient change with the size
and complexity (connectance) of the network \citep{Dunne2002}.

Weight information of complex networks such as air traffic network
or metabolism networks, etc., can reveal more unique patterns and
features that are never found in binary
relationships\citep{Barrat2004,Almaas2004,Montis2007}. Food web
weights have two different, yet correlated, meanings in ecology. One
is the strength of the trophic
interaction\citep{Emmerson2004,Berlow2004}; the other is the amount
of energy flow. They are correlated because interaction strength is
the per capita measure of energy flow. Interaction strength-based
weighted food webs exhibit new features such as a relationship
between predator-prey body size ratio and interaction strength
\citep{McCann1998,Wootton2002,Berlow2004}.

Additionally, weights of food webs can also be denoted as the total
amount of energy flow between two species when the whole system is
in the steady state. Studies of energy flow networks in ecosystems
have a long history
\citep{odum1988,Finn1976,Szyrmer1987,Higashi1986,Baird1989,higashi1993,Patten1981,Patten1982}.
Many systematic indicators were designed to depict the macro-state
of energy flow in ecosystem \citep{Fath1999,Fath2001}, of which some
can not only reflect the direct energy flows between species but
also indirect effects and inter-dependence of
species\citep{Fath1999,Finn1976}. Although some important
discoveries were made about the general structure and function of
ecological
networks\citep{Fath1998,Hannon1973,levine1980,Hannon1986,Patten1985,Ulanowicz1986,Ulanowicz1997b},
few focus on power law distributions and
relations\citep{Ulanowicz1990}.

Allometric scaling is an important universal pattern of flow
systems. \cite{kleiber1932} found that the metabolism and body size
of all species follow a ubiquitous power law relationship, with an
exponent around 3/4. \cite{west1997} and \cite{banavar1999}
explained this pattern as an emergent property of nutrient and
energy transportation networks. This recognition encouraged people
to realize that allometric scaling may be a universal feature for
all transportation systems. \cite{garlaschelli2003} extended
Banavar's approach to binary food webs and found a similar
allometric scaling power law relationship. Although Garlaschelli's
method as an algorithm had been applied to various networks,
including the worldwide trade network\citep{Duan2007} and tree of
life\citep{Herrada2008}, it had several shortcomings. The first step
of his algorithm is to obtain a spanning tree by cutting many edges
in the original network so that a certain amount of information is
lost\citep{garlaschelli2003}. \cite{allesina2005} improved this
method by reducing the original network to a directed acyclic graph.
Although less information is lost, cutting edges is still
unavoidable. The second shortcoming of Garlaschelli's approach and
Allesina's improvement is they can be applied to binary networks,
but not weighted ones.

This paper will combine the successful approaches in complex
weighted networks and earlier studies on ecological flow networks to
reveal the underlying heterogeneities and universal scaling
behaviors of food webs. The study is organized as follows. In
section \ref{sec.methods}, the basic ideas and steps for obtaining
$A_i$ and $C_i$ are introduced. Afterwards, we apply these tools on
21 empirical food webs with energy flow information. Section
\ref{sec.powerlawdisai} and \ref{sec.powerlawdisci} study the power
law distributions of $A_i$ and $C_i$. We extend Garlarschelli's
approach to weighted food webs without cutting edges. The allometric
scaling power law relationship between $A_i$ and $C_i$ is shown in
section \ref{sec.allometricscaling}. A simple mathematical
relationship among scaling exponents of power law distributions and
power law relation is derived and tested using empirical food webs
in section \ref{sec.exponentrelation}. Finally, the ecological
meaning of $A_i$ and $C_i$, the distributions of flux matrix and
fundamental matrix, consideration of node information, etc., are
discussed in section \ref{sec.discussion}. A simple example of our
approach, a comparison to the existing methods, and the theorem
regarding power law exponents are presented in the Appendix.

\section{Methods}
\label{sec.methods}

In this section, we outline the basic idea and mathematical
definition of our method. One simple example showing how the
approach works will be discussed in the \ref{sec.example} in detail.

\label{sec.representation}
\subsection{Flux Matrix}
An ecological energy flow network is a weighted directed graph that
represents relationships of ecological energy transfer. For a given
graph, a matrix called flux matrix in this paper can be defined as
representing the energy flux between species.
\begin{equation}
\label{eqn.Fluxmatrix}
 F_{(N+2)\times (N+2)}=\{f_{ij}\}_{(N+2)\times (N+2)}
\end{equation}
where $f_{ij}$ is the energy flux from species $i$ to $j$. Two
special vertices represent the environment: vertex $0$ and vertex
$N+1$. Vertex $0$ denotes the source of energy flow, whereas vertex
$N+1$ represents the sink. We expect that the dissipative and
exported energy will flow to vertex $N+1$. Therefore, there are in
total $(N+2)\times (N+2)$ entries in the flux matrix.

\subsection{Fundamental Matrix}

Suppose that the flow network is balanced, meaning that the total
influx equals the efflux for each vertex $i\in [1,N]$. We can then
define an $N\times N$ matrix $M$ from $F$ follows,

\begin{equation}
\label{defm} m_{ij}={f_{ij} /(\sum_{k=1}^{N+1}{f_{ik}})}, \forall
i,j \in [1,N]
\end{equation}

and the fundamental matrix can be derived as
\begin{equation}
\label{eqn.utilizationmatrix}U = I+M+M^2+ \cdots =
\sum_{i=0}^{\infty}{M^i}=(I-M)^{-1}
\end{equation}

where, $I$ is the unity matrix. Any element $u_{ij}$ in $U$ matrix
denotes the influence $i$ to $j$ along all possible pathways. $U$
matrix was first introduced in economic input-output
analyses\citep{leontief1951,Leontief1966} to indicate the direct and
indirect effects of good flows in various economic sectors.
\cite{Hannon1973} was the first to apply this matrix to
ecology\citep{Fath1999,ulanowicz2004}.

Given the flux matrix and fundamental matrix, two vertex-related
variables, $A_i$ and $C_i$, which will later be shown to follow
power law distributions, are defined.

\subsection{$A_i$}

We can calculate the total flux through any given vertex $i$
according to $F$. This value is also called node intensity in
complex weighted network studies\citep{Almaas2004}. Because the
network is balanced, we need only calculate the efflux of each node
as $A_i$,
\begin{equation}
\label{eqn.ai} A_i=\sum_{j=1}^{N+1}{f_{ij}}, \forall i \in [1,N]
\end{equation}

\subsection{$C_i$}

Another vertex-related index called $C_i$ can be defined to reflect
the total indirect effects or the total energy store of the
sub-network rooted from vertex $i$.

\begin{equation} \label{eqn.ci}
C_i=\sum_{k=1}^N{\sum_{j=1}^N{(f_{0j}{u_{ji}}/ u_{ii})u_{ik}}}
\end{equation}

We will provide an explanation of the indicator $C_i$ in
\ref{sec.example} by a simple example.

$A_i$ is the total flow-through or intensity of vertex $i$. $C_i$ is
the total influence of vertex $i$ on all vertices in the whole
network. Suppose that of the many particles flowing in the network
\citep{higashi1993}, those passing vertex $i$ will be colored red.
$C_i$ would then be the total number of red particles flowing in the
network. Actually, these two variables are extended from the
approach of \cite{garlaschelli2003} to calculate the allometric
scaling of food webs (see \ref{sec.comparison}).

\subsection{Balancing the Network}

Sometimes the empirical network is not strictly balanced. To
facilitate our algorithm, we can balance them artificially. Suppose
$\sum_{j=0}^{N}{f_{ji}}\neq \sum_{j=1}^{N+1}{f_{ij}}$ for vertex
$i$. We can add an edge with the weight $|f'_{ij}|$,
$f'_{ij}=\sum_{j=0}^{N}{f_{ji}}-\sum_{j=1}^{N+1}{f_{ij}}$ to connect
the vertex $i$ to $N+1$ or $0$. If $f'_{ij}>0$, the direction of
this artificial edge is from $i$ to $N+1$. If $f'_{ij}<0$, the
direction is from $0$ to $i$. Normally, the artificial edges have
very small weights because most empirical food webs are almost
balanced already.

\subsection{Power Laws}

After calculating the indicators of $A_i$ and $C_i$ for each vertex
$i$, we will show that they follow the power law distributions in
the high tails, which means that,

\begin{equation}
\label{eqn.powerlawai} P(A_i>x)\thicksim x^{1-\alpha}
\end{equation}

and

\begin{equation}
\label{eqn.powerlawci} P(C_i>y)\thicksim y^{1-\beta}
\end{equation}

for given $A_i$ and $C_i$ which are larger than given thresholds
$x_0,y_0$ \citep{Clauset2007}, and where $\thicksim $ represents
"proportional to." The cumulative probability distribution curves
will be shown and the power law exponents $\alpha,\beta$ calculated
in the next section.

Furthermore, we will show that $A_i$ and $C_i$ satisfy a power law
relationship,

\begin{equation}
\label{eqn.allometricscaling} C_i\thicksim A_i^{\eta}
\end{equation}

This relationship is also called the allometric scaling law because
$A_i$ represents metabolism and $C_i$ is the equivalent body mass of
the sub-system rooted from vertex $i$ (see \ref{sec.example}).

\section{Results}
\subsection{Dataset}
Twenty one food webs containing energy flow information from
different habitats were studied(Table \ref{tab.foodwebdata}). These
food webs were obtained from an online database\footnotemark
\footnotetext[1]{http://vlado.fmf.uni-lj.si/pub/networks/data/bio/foodweb/foodweb.htm},
and most are from published papers. In Table \ref{tab.foodwebdata},
we list the name and the number of nodes ($|N|$) and edges ($|E|$)
in each web. The number of nodes does not include the "respiration"
node, and the number of edges only counts the energy flows between
species, and does not include the edges from (to) "input" and
"output." The weights of edges in these food webs are energy flows
whose values vary across a large range because the units and time
scales of the measurements are very different.

\begin{table}
\centering
 \caption{Empirical Food Webs and Their Topological
 Properties}{\small ($|N|$ stands for the number of vertices of the network and $|E|$ is the number of edges. The webs are sorted by their number of edges.)}
 \label{tab.foodwebdata}
\begin{tabular}{llllll}
\hline Food web & Abbre.  & $|N|$ & $|E|$\\
\hline

Florida Bay, Dry Season & BayDry  & 127 & 1969\\
Florida Bay, Wet Season & BayWet  & 127 & 1938\\
Florida Bay & Florida & 127 & 1938\\
Mangrove Estuary, Dry Season & MangDry  & 96 & 1339\\
Everglades Graminoid Marshes  & Everglades  & 68 & 793\\
Everglades Graminoids, Dry Season  & GramDry  & 68 & 793\\
Everglades Graminoids, Wet Season  & GramWet  & 68 & 793\\
Cypress, Dry Season & CypDry  & 70 & 554\\
Cypress, Wet Season & CypWet  &70 & 545\\
Mondego Estuary - Zostrea site & Mondego  & 45 & 348\\
St. Marks River (Florida) & StMarks  & 53 & 270\\
Lake Michigan & Michigan  & 38 & 172\\
Narragansett Bay & Narragan  & 34 & 158\\
Upper Chesapeake Bay in Summer &ChesUpper & 36 & 158\\
Middle Chesapeake Bay in Summer &ChesMiddle  & 36 & 149\\
Chesapeake Bay Mesohaline Net &Chesapeake  & 38 & 122\\
Lower Chesapeake Bay in Summer & ChesLower & 36 & 115\\
Crystal River Creek (Control) &CrystalC  & 23 & 81\\
Crystal River Creek (Delta Temp) &CrystalD  & 23 & 60\\
Charca de Maspalomas  & Maspalomas  & 23 & 55\\
Rhode River Watershed - Water Budget & Rhode  & 19 & 35\\
 \hline
\end{tabular}
\end{table}

After studying the scaling laws of these food webs, we divided the
results into two main sections. First, the power law distributions
that reflect the heterogeneity of $A_i$ and $C_i$ are shown. Second,
the allometric scaling power law relationship that depicts the
self-similar structures of energy flows is discussed.

\subsection{Power Law Distributions of $A_i$}
\label{sec.powerlawdisai}

We calculated the random variables $A_i$ for each of the 21
empirical food webs. Four of them are selected to plot in Figure
\ref{fig.aidistribution}.

\begin{figure}
\centerline {\includegraphics{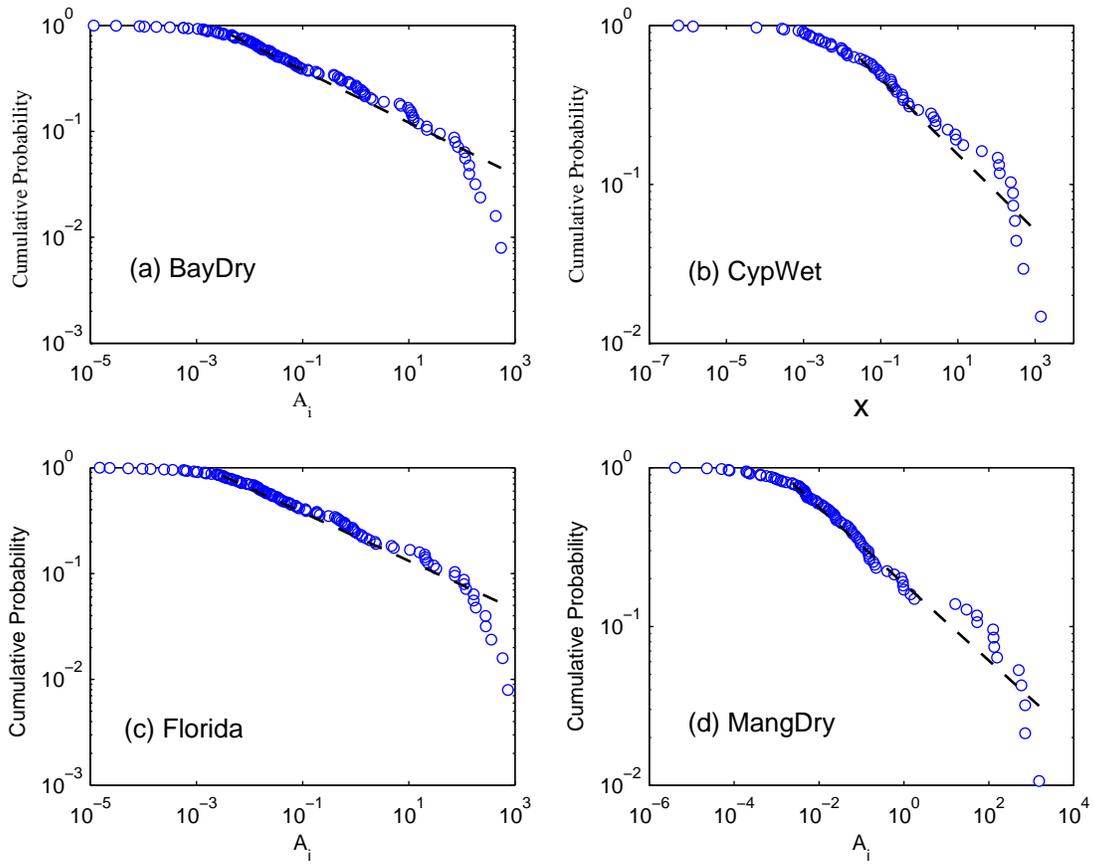}} \vskip3mm
\caption{Cumulative probability of the variable $A_i$ for four
selected food webs with the best fitted lines. The fitted lines
start from $x_0$, and their slopes are listed in Table
\ref{tab.aidistribution}}\label{fig.aidistribution}
\end{figure}

$A_i$s of real food webs decay as power law in the high tail;
therefore, the fitted lines start from given lower bounds $x_0$
\citep{Clauset2007}. These figures show that $A_i$ follows the power
law distribution.

There are obvious cutoffs in the tails that may be attributed to
sampling effects\citep{Newman2005}. According to equation
\ref{eqn.powerlawai}, the cumulative probability decays as $x$
slowly. The probability of finding a higher $A_i$ value (in the tail
of the curve) is very small. Therefore, the number of samples in
this interval becomes very few and statistical fluctuations are
unavoidably large as a fraction of sample number. This phenomenon is
obvious in other fields such as income\citep{Clementi2006}, personal
donations\citep{ChenWang2009} and the number of species per genus of
mammals\citep{Newman2005}.

The scaling exponent $\alpha$ for each food web was estimated
according to the maximum likelihood approach \citep{Clauset2007}.
The exponents and relative errors of power law fittings for all 21
food webs are listed in Table \ref{tab.aidistribution}.

\begin{table}
 \centering
 \caption{Parameters of the power law distributions for $A_i$}{\small
 ($\alpha$ is the power law distribution exponent of equation \ref{eqn.powerlawai};
 $x_0$ is the smallest value of $A_i$ that follows power law, $x_{max}$ is the largest $A_i$;
$D$ is the  KS statistic; $\sigma$ is the quantile of 95\%
confidence interval; and the number of samples is the total number
of nodes following the power law distribution. The webs are sorted
by their number of edges)}
 \label{tab.aidistribution}
\begin{tabular}{lccccc}
\hline Food web & $\alpha$ & $x_0/x_{max}$ & $D$ & $\sigma$ & No. of samples\\
\hline
Baydry&1.25&8.45e-006&0.09&0.13&102\\
Baywet&1.23&4.24e-006&0.08&0.13&105\\
Florida&1.23&4.24e-006&0.08&0.13&105\\
Mangdry&1.24&1.55e-006&0.06&0.16&75\\
Everglades&1.23&4.20e-006&0.12&0.20&44\\
Gramdry&1.25&9.55e-006&0.11&0.21&41\\
Gramwet&1.23&4.20e-006&0.12&0.20&44\\
Cypdry&1.28&5.95e-005&0.10&0.20&44\\
Cypwet&1.23&2.06e-005&0.10&0.21&42\\
Mondego&1.33&3.87e-004&0.11&0.26&26\\
StMarks&1.35&9.36e-004&0.17&0.21&43\\
Michigan&1.32&1.80e-003&0.18&0.31&18\\
Narragan&1.26&1.10e-004&0.16&0.23&33\\
ChesUpper&1.28&9.54e-004&0.21&0.23&32\\
ChesMiddle&1.29&9.47e-004&0.17&0.24&30\\
Chesapeake&1.41&8.25e-003&0.22&0.29&21\\
ChesLower&1.77&5.79e-002&0.20&0.33&16\\
CrystalC&1.27&1.04e-004&0.13&0.32&17\\
CrystalD&1.25&3.10e-005&0.12&0.30&19\\
Maspalomas&1.54&2.22e-002&0.18&0.29&20\\
Rhode&1.56&1.35e-002&0.17&0.34&15\\

 \hline
\end{tabular}
\end{table}

In Table \ref{tab.aidistribution}, $x_0$s represent the lower bounds
of the power law distributions. We normalized $x_0$ by dividing the
maximum $A_i$ of each food web to avoid the large range variance of
$x_0$ among different webs because their units and the measurement
time scales are very different.

$D$ is the statistic of the KS
test\citep{Rousseau2000,Goldstein2004}. Its value reflects the
maximum distance between the cumulative probability of real data and
the fitted model. Therefore, the smaller $D$ values indicate the
better power law fitting. $\sigma$ is the quantile of the 95\%
confidence interval for different numbers of
samples\citep{Noether1967}, and is only a reference for $D$. If $D$
is smaller than $\sigma$, then we should accept the power-law
hypothesis\citep{Noether1967}. From Table \ref{tab.aidistribution},
we know that all food webs pass the KS test. In the last column,
"No. of samples" means the number of sample points that are larger
than $x_0$ and follow the power law distribution.

By comparing different rows, we know that the food webs with more
edges can be better described by power laws because their $D$s are
smaller. Further, the scaling exponent $\alpha$ and $x_0/x_{min}$
increase as the scale of the network decreases. All $\alpha$ values
fall into the interval $[1.23,1.77]$, with an average of $1.32$.

The power law distributions of $A_i$s reflect the heterogeneities of
energy flux. Few nodes possess high $A_i$ values, while most nodes
only share a small fraction of the energy flux. The exponent of
power law reflects the degree of heterogeneity of the whole network.
Therefore, larger food webs are more heterogeneous than smaller ones
because their exponents are lower (Table \ref{tab.aidistribution}).
Although the power law distribution of $A_i$ can not give us
concrete information about each vertex
\citep{Fath1999,Patten1981,Patten1982}, it helps us to understand
the network as a whole.

\subsection{Power Law Distributions of $C_i$}
\label{sec.powerlawdisci}

The same approach can be applied to $C_i$s. The distributions of
$C_i$s for four selected food webs are shown in Figure
\ref{fig.cidistribution}.

\begin{figure}
\centerline {\includegraphics{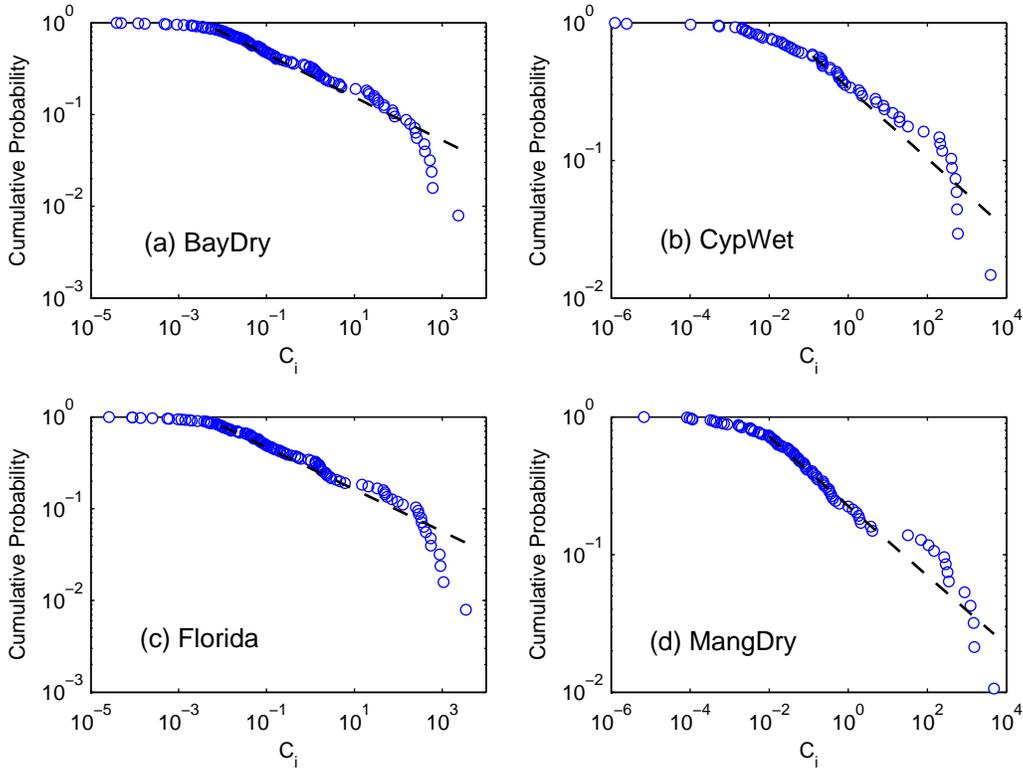}} \vskip3mm
\caption{Cumulative probability of the variable $C_i$s for four
selected food webs with the best fitted lines. The slopes of the
fitted lines are listed in Table
\ref{tab.cidistribution}}\label{fig.cidistribution}
\end{figure}

The curves of $C_i$ distributions are very similar to the curves in
Figure \ref{fig.aidistribution}. The estimated exponents and the KS
test parameters are listed in Table \ref{tab.cidistribution}.

\begin{table}
\centering
 \caption{Parameters of the power law distributions for $C_i$}
{\small
 ($\beta$ is the power law distribution exponent of equation \ref{eqn.powerlawci};
 $y_0$ is the smallest value of $C_i$ that follows the power law, $y_{max}$ is the largest $C_i$;
 $D$ is the KS statistic; $\sigma$ is the quantile of 95\%
confidence interval; the number of samples is the total number of
nodes following the power law distribution. The webs are sorted by
their number of edges)}
 \label{tab.cidistribution}
\begin{tabular}{lccccc}
\hline Food web & $\beta$ & $y_0/y_{max}$ & $D$ & $\sigma$ & No. of samples\\
\hline
Baydry&1.23&2.97e-006&0.08&0.13&107\\
Baywet&1.23&2.71e-006&0.08&0.14&101\\
Florida&1.23&2.71e-006&0.08&0.14&101\\
Mangdry&1.25&2.03e-006&0.07&0.16&68\\
Everglades&1.23&2.16e-006&0.10&0.20&44\\
Gramdry&1.25&4.57e-006&0.10&0.21&42\\
Gramwet&1.23&2.16e-006&0.10&0.20&44\\
Cypdry&1.28&4.88e-005&0.10&0.20&44\\
Cypwet&1.25&3.13e-005&0.11&0.21&39\\
Mondego&1.29&9.61e-005&0.12&0.24&30\\
StMarks&1.42&1.27e-003&0.16&0.21&40\\
Michigan&1.30&7.22e-004&0.17&0.31&18\\
Narragan&1.24&1.22e-004&0.17&0.23&33\\
ChesUpper&1.28&6.03e-004&0.20&0.23&32\\
ChesMiddle&1.26&3.88e-004&0.19&0.24&30\\
Chesapeake&1.61&2.41e-002&0.18&0.33&16\\
ChesLower&1.84&5.02e-002&0.22&0.35&14\\
CrystalC&1.28&1.04e-004&0.13&0.32&17\\
CrystalD&1.23&1.35e-005&0.12&0.29&20\\
Maspalomas&1.56&1.63e-002&0.19&0.29&20\\
Rhode&1.52&8.93e-003&0.23&0.32&17\\

 \hline
\end{tabular}
\end{table}

Comparing Table \ref{tab.cidistribution} with Table
\ref{tab.aidistribution}, the exponents of $C_i$ distributions are
slightly higher than those of $A_i$ distributions. The exponent
$\beta$ and KS test statistic $D$ decrease with the scale of the
network. The average exponent of these food webs is $1.33$, with all
$C_i$ values falling into the interval $[1.23, 1.84]$.

We have studied the heterogeneity of $C_i$s node by node. The nodes
with high $A_i$ values always have high $C_i$ values, indicating a
possible positive correlation between $A_i$ and $C_i$.

\subsection{Allometric Scaling Relations}
\label{sec.allometricscaling}

The similarity between Figure \ref{fig.cidistribution} and Figure
\ref{fig.aidistribution} shows that there must be some connections
between $A_i$ and $C_i$. Allometric scaling of these flow networks
revealed that the relationship between $A_i$ and $C_i$ is actually a
power law.

\begin{figure}
\centerline {\includegraphics{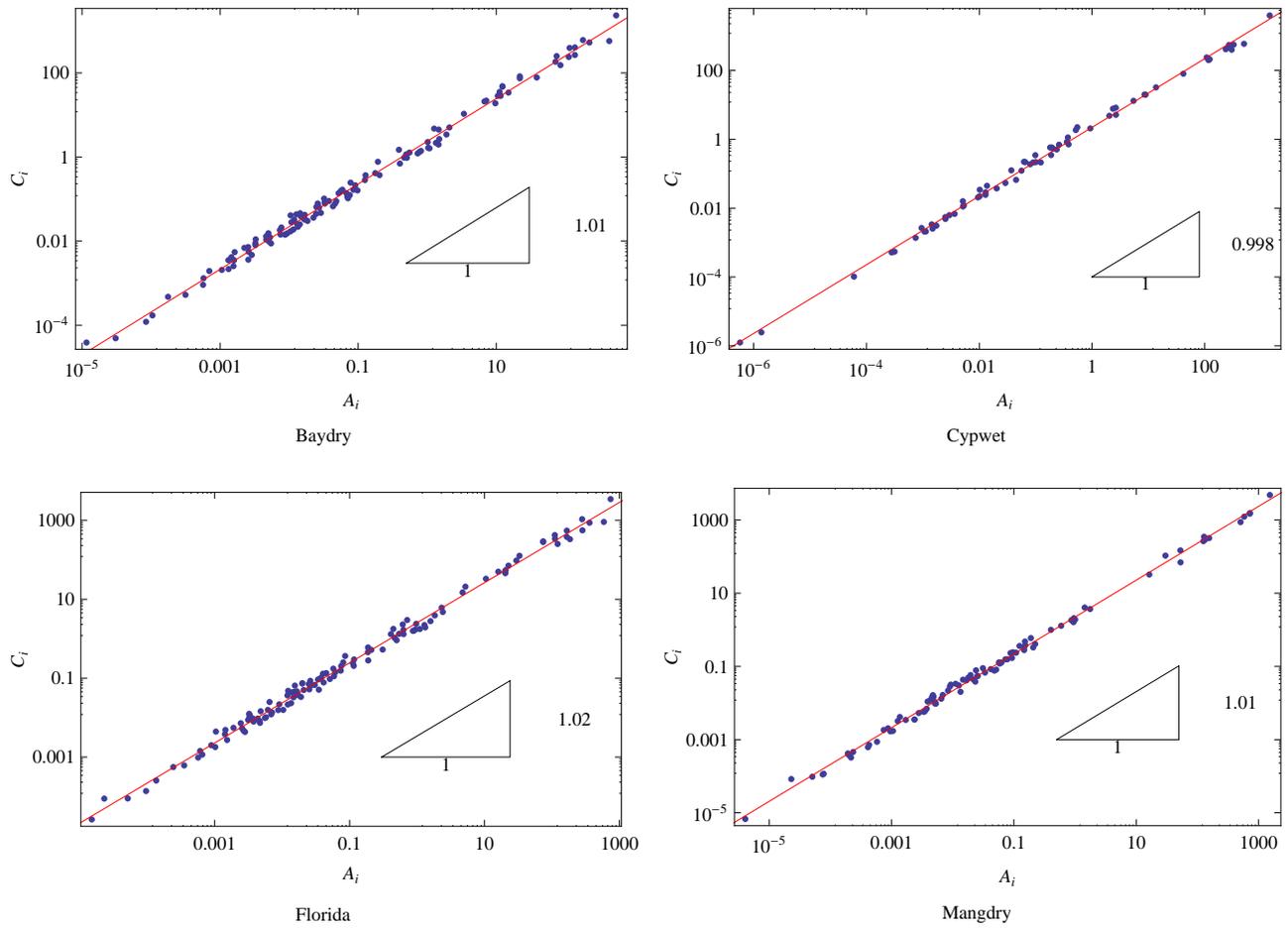}} \vskip3mm
\caption{Allometric Scaling relationship between $A_i$s and $C_i$s
of four selected food webs with the best fitted
lines}\label{fig.allometricscaling}
\end{figure}

As shown in Figure \ref{fig.allometricscaling}, the sample points
aggregate around their fitted lines very well. This relationship is
ubiquitous for all 21 food webs as shown in Table
\ref{tab.allometricscaling}.

\begin{table}
 \centering
 \caption{Allometric scaling of Empirical Food Webs}
 {\small(The second column lists $\eta$s of each food web with the errors; The webs are sorted by their number of edges)\\}
 \label{tab.allometricscaling}
\begin{tabular}{lccc}
\hline Food web & $\eta$ & $R^2$ \\ \hline
Baydry&1.01$\pm$0.01&0.9946\\
Baywet&1.02$\pm$0.01&0.9946\\
Florida&1.02$\pm$0.01&0.9946\\
Mangdry&1.01$\pm$0.01&0.9967\\
Everglades&1.02$\pm$0.01&0.9992\\
Gramdry&1.03$\pm$0.01&0.9990\\
Gramwet&1.02$\pm$0.01&0.9992\\
Cypdry&1.00$\pm$0.02&0.9957\\
Cypwet&1.00$\pm$0.01&0.9970\\
Mondego&1.01$\pm$0.01&0.9989\\
StMarks&1.03$\pm$0.04&0.9784\\
Michigan&1.01$\pm$0.01&0.9986\\
Narragan&1.01$\pm$0.04&0.9910\\
ChesUpper&1.05$\pm$0.02&0.9966\\
ChesMiddle&1.04$\pm$0.02&0.9959\\
Chesapeake&0.99$\pm$0.02&0.9966\\
ChesLower&1.05$\pm$0.02&0.9974\\
CrystalC&0.89$\pm$0.23&0.7706\\
CrystalD&0.90$\pm$0.23&0.7722\\
Maspalomas&0.96$\pm$0.09&0.9656\\
Rhode&0.83$\pm$0.17&0.8658\\

 \hline
\end{tabular}
\end{table}

We used the minimum square error method to find the best-fitted line
(Table \ref{tab.allometricscaling}). $R^2$s were larger than $0.9$
for all food webs except CrystalC, CrystalD and Rhode, whose scales
are very small ($|N|<23$). The $R^2$s and exponents decrease with
the scale of the network because the statistical significance
decreases as the number of samples declines. All exponents $\eta$
fall into the interval $[0.83,1.05]$. The mean value of $\eta$s for
these food webs, except CrystalC,CrystalD and Rhode, is $1.02$.

We also show the $A_i$ and $C_i$ values of root nodes for all food
webs, and fit them with a line on the log-log plot (Figure
\ref{fig.rootpower}). These power law relations reflect the
self-similar nature of the weighted food webs.

\begin{figure}
\centerline {\includegraphics[width=12cm]{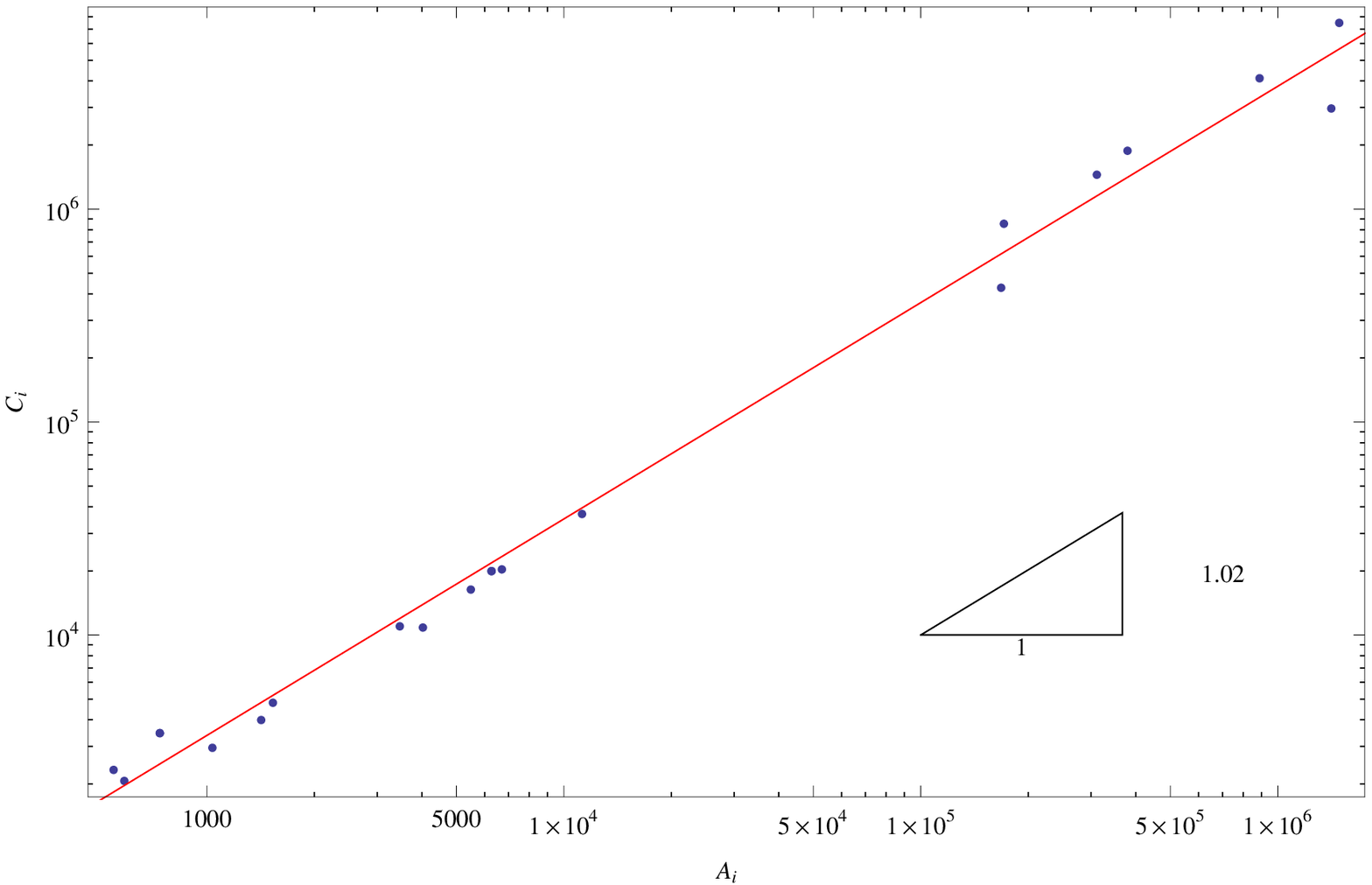}} \vskip3mm
\caption {$A_i$, $C_i$ plot for root nodes of 21 food webs. The
slope $\eta$ (the power law exponent) is
1.02$\pm$0.02}\label{fig.rootpower}
\end{figure}

\subsection{Relationship of Scaling Exponents}
\label{sec.exponentrelation}

As we have shown, $A_i$ and $C_i$ are random variables following
power law distributions with scaling exponents $\alpha$ and $\beta$,
respectively. They also follow a power law relationship with the
scaling exponent $\eta$. Is there any universal relationship among
$\alpha,\beta$ and $\eta$?

Actually, for any random variables following power law distributions
and power law relations, we can prove a mathematical theorem (see
\ref{sec.theorem}). According to this theorem, the power law
exponents $\alpha,\beta,\eta$ of the food webs should also satisfy
equation \ref{eqn.exponentsrelation}. We tested this hypothesis by
calculating $(\alpha-1)/(\beta-1)-\eta$ for each of the 21 empirical
food webs to obtain Figure \ref{fig.exponents}.

\begin{figure}
\centerline {\includegraphics[width=12cm]{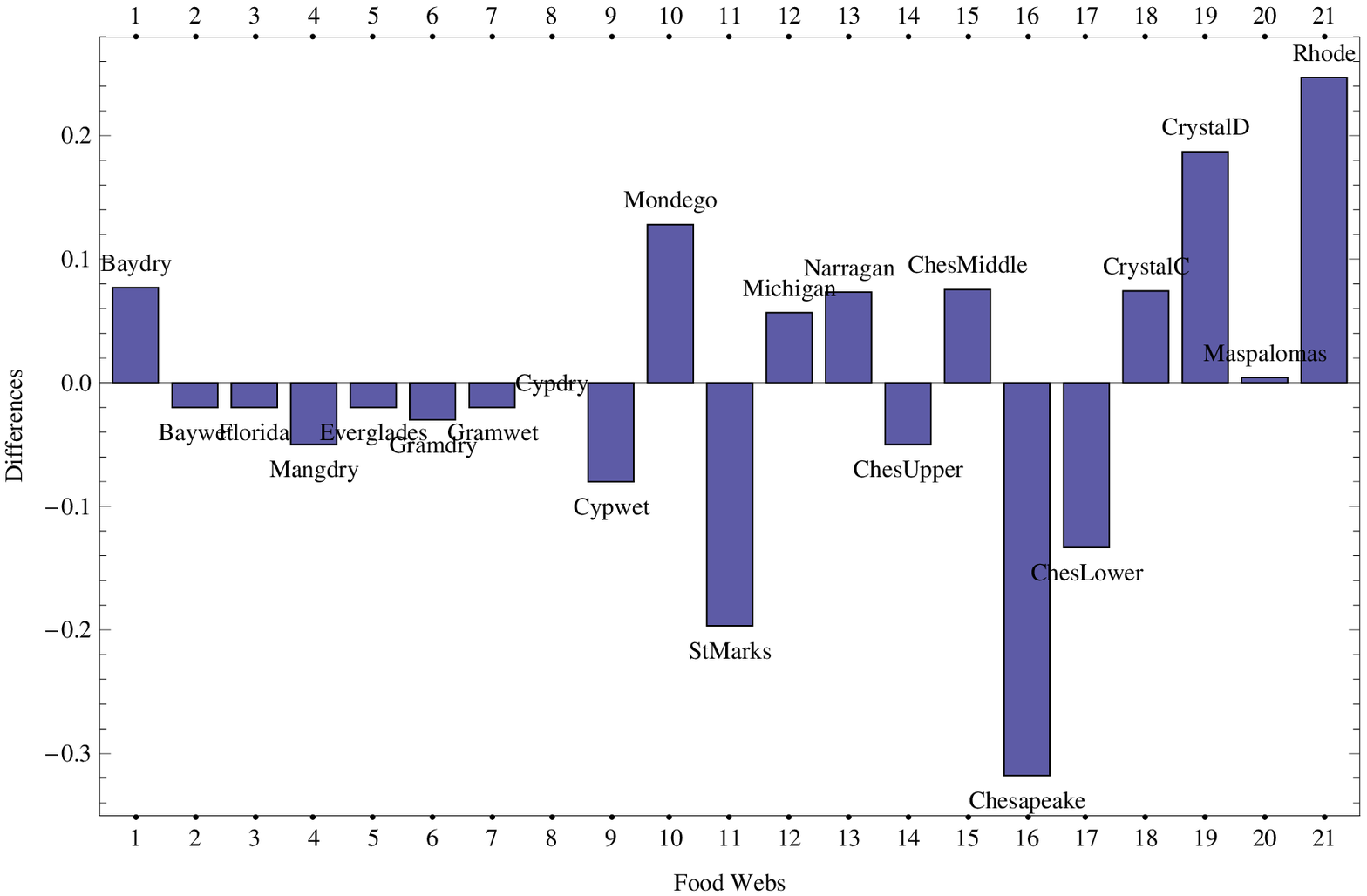}} \vskip3mm
\caption {The differences between $\eta$ and
${(\alpha-1)}/{(\beta-1)}$ are calculated for all food webs to test
the exponents relationship (equation \ref{eqn.exponentsrelation}).
The food webs are sorted according to their number of
edges}\label{fig.exponents}
\end{figure}

From Figure \ref{fig.exponents}, we know that most exponents of food
webs satisfy this relation, and that the errors become larger as the
scale of the network decreases. As the scaling behaviors we are
studying are statistical properties, the significance of these
regularities will increase with the number of samples. Therefore,
scaling behaviors are more obvious and accurate for large scale
networks because larger webs have more sample points.

\section{Discussion}
\label{sec.discussion}
\subsection{Ecological Meaning of Power Law Exponents}
As demonstrated above, food webs as energy transportation networks
always follow power law distributions and relations. Three important
exponents ($\alpha$,$\beta$ and $\eta$) are derived from these power
law regularities. The question of whether these exponents carry
ecological meaning naturally follows, and at first, the three
exponents all reflect integral properties of whole networks.
$\alpha$ describes the heterogeneity of first passage energy flux
distributions among vertices. The heterogeneity decreases with
$\alpha$. Therefore the distributions of energy flows are more
uneven in large food webs than the small ones. From table
\ref{tab.aidistribution}, we also know that all $\alpha$ values fall
into the interval $[1.23,1.77]$. According to the features of power
law distributions, the means and variances of power law random
variables with exponents smaller than 2 are
divergent\citep{Newman2005}. Therefore, energy flux on food webs has
no characteristic value. It is meaningless to find a specific
species with the average energy flux as the representation of other
species\citep{Newman2005}.

The allometric scaling relation describes the self-similarity of
flow networks. \cite{garlaschelli2003,allesina2005} pointed out that
allometric scaling exponents describe the transportation efficiency
of binary food webs because $C_i$ is treated as the cost of
transportation. The range of these exponents is between 1 (most
inefficient network) and 2 (most efficient network). However, we
believe that the exponent $\eta$ discussed in this paper does not
describe the efficiency of the whole network. As pointed out in
section \ref{sec.methods} and \ref{sec.example}, $C_i$ can be
understood as the energy store by the system, rooted from $i$ but
not the cost of the transportation. Thus, the food web with higher
$\eta$ can store more energy with the same consumption of
metabolites ($A_i$). Therefore, we believe that the food webs with
higher $\eta$ are more capable of storing energy by means of cycling
the flows in the network. In Table \ref{tab.allometricscaling}, we
see that the networks with larger scales have larger $\eta$ values.
Consequently, food webs can increase their ability to store energy
by increasing their complexity.

Further, the range of exponents $\eta$ is not simply $[1,2]$ (see
Table \ref{tab.allometricscaling}). As
\cite{garlaschelli2003,allesina2005,banavar1999} pointed out, the
range $[1,2]$ is only suitable for spanning trees or directed
acyclic graphs of the original binary food webs. However, our method
considers more ingredients, including the energy flux as the weight
of edges, the loop structures of energy flows, and the heterogenous
energy dissipation of each node, than the mere topology of the food
webs with homogenous nodes. That is the reason why the exponents are
out of the range $[1,2]$.

The exponent $\beta$ also describes the heterogeneity of indirect
effects. It is determined by exponents $\alpha$ and $\beta$ via the
theorem mentioned in section \ref{sec.exponentrelation}.

\subsection{Flux Matrices and Fundamental Matrices}
As discussed in section \ref{sec.methods}, $A_i$ and $C_i$ are
defined according to the flux matrix and fundamental matrix.
Therefore, the scaling behaviors of the food webs are determined by
the matrices. The properties of these matrices may help us to
understand the origin of the scaling behaviors.

The elements in flux matrices also follow power law distributions,
with an average exponent 1.46 (see Table
\ref{tab.matricesdistribution}). We hypothesis that this power law
determines the power law distribution of $A_i$.

\begin{table}
 \centering
 \caption{Parameters of the power law distributions for $f_{ij}$}
 {\small
($\alpha$ is the power law distribution exponent; $F_0$ is the
smallest value of $f_{ij}$ that follows the power law; $F_{max}$ is
the largest $f_{ij}$; $D$ is the KS statistic; $\sigma$ is the
quantile of 95\% confidence interval; number of samples is the total
number of edges following the power law distribution. The webs are
sorted by their number of edges)}

 \label{tab.matricesdistribution}
\begin{tabular}{lcccccc}
\hline Food web & $\alpha$ & $F_0/F_{max}$ & $D$ & $\sigma$ & No. of
Samples
\\ \hline
Baydry&1.33&3.82e-006&0.03&0.05&761\\
Baywet&1.33&4.91e-006&0.02&0.06&608\\
Florida&1.33&4.91e-006&0.02&0.06&608\\
Mangdry&1.32&2.81e-007&0.04&0.05&614\\
Everglades&1.37&2.98e-006&0.06&0.09&222\\
Gramdry&1.39&3.32e-006&0.06&0.10&194\\
Gramwet&1.37&2.98e-006&0.06&0.09&222\\
Cypdry&1.34&5.93e-005&0.05&0.10&203\\
Cypwet&1.29&1.15e-005&0.06&0.09&240\\
Mondego&1.40&4.80e-005&0.05&0.12&123\\
StMarks&1.84&1.29e-002&0.10&0.19&53\\
Michigan&1.43&5.88e-004&0.11&0.17&62\\
Narragan&1.70&1.43e-002&0.09&0.23&34\\
ChesUpper&1.32&1.95e-004&0.09&0.11&151\\
ChesMiddle&1.31&1.28e-004&0.11&0.12&134\\
Chesapeake&1.64&2.05e-002&0.17&0.24&30\\
ChesLower&1.65&1.17e-002&0.16&0.18&55\\
CrystalC&1.37&1.01e-004&0.08&0.24&30\\
CrystalD&1.29&6.56e-005&0.09&0.27&24\\
Maspalomas&1.71&2.10e-002&0.13&0.19&53\\
Rhode&1.84&2.65e-002&0.13&0.29&20\\

 \hline
\end{tabular}
\end{table}

However, unlike other variables, the distributions of fundamental
matrices are not power laws but rather more like log-normal because
the tails of the curves decline quickly (see Figure
\ref{fig.utilizationdistribution}). We also studied all of the
fundamental matrices of 21 empirical food webs, and noted that very
few could pass the KS test.

We presume that the calculation of the fundamental matrix in
equation \ref{eqn.utilizationmatrix} needs infinite operations on
$F$ matrix (see section \ref{sec.methods}). As a result, the noise
in $F$ matrices is enlarged and accumulated in the tails of the
distribution curves of $U$ matrix. However, the means by which the
non-power law distribution of fundamental matrices determines power
law distributions of $C_i$ and the allometric scaling power law
relationship is an interesting problem for future studies.

\begin{figure}
\centerline {\includegraphics{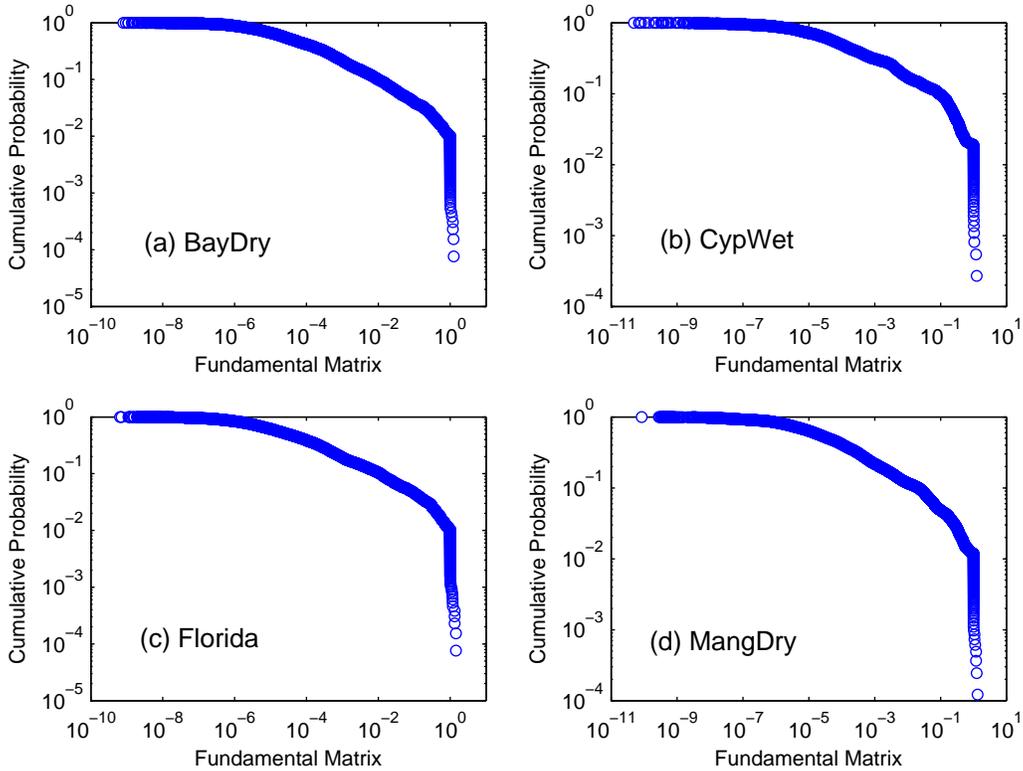}} \vskip3mm
\caption{Cumulative distribution of fundamental matrices of four
selected food webs}\label{fig.utilizationdistribution}
\end{figure}

\subsection{Information on Nodes}
One of the weak points of our study on allometric scaling power law
relationships is that the exponents are very close to 1. However, in
this case, "allometry" just means the non-linear relationship
between two variables, so, the relationship between $A_i$ and $C_i$
cannot be rigorously defined as an allometric scaling relationship.
Because the calculations of fundamental matrices and $C_i$s are
always based on linear algebra, the results are close to linear
relationship. One possible way to mend this weak point is to further
consider the information available about nodes.

Indeed, much information on nodes, i.e., each species in the web, is
ignored in this work. The biomass as the weight of each node is
available for many food webs. According to the definition, $C_i$ is
simply the energy store of the sub-system rooted from the vertex
$i$. Therefore, biomass information should be included in $C_i$
because a large part of energy will flow into the species node
stored as biomass. It is possible that a new approach of calculating
$C_i$ including the biomass information for all species may break
the linear relationship between $A_i$ and $C_i$.

Another important node characteristic is the body size of a given
species. The metabolic theory predicts that species body size of the
species can not only determine metabolism, life span, and birth
rate, etc. \citep{brown2004}, but also play an important role in
energy flows and food webs \citep{Cohen2003}. An integrated theory
of weighted food webs based on energy flow networks should contain
body size data.
\section{Concluding Remarks}
This paper presents a new approach to reveal the scaling natures of
weighted food webs as energy flow networks based on flux and
fundamental matrices. The $A_i$,$C_i$ distributions and the
relationship between them always follow power laws. The power law
exponents $\alpha$,$\beta$ and $\eta$ satisfy a relationship,
$\eta=(\alpha-1)/(\beta-1)$ as proved by the theorem
\ref{thm.exponent}. Power law exponents consistently change with
network scales.

We note that the allometric scaling exponent does not reflect the
transportation efficiency of networks but rather the capability of
storing energy, which is very different from previous studies. We
also investigated the distributions of flux matrices and fundamental
matrices, and suggested that biomass information should be
incorporated into future studies.

\paragraph{Acknowledgement}
Thanks for the support of National Natural Science Foundation of
China(No.70601002 and No.70771011). We acknowledge Clauset, A. for
providing the source code of power law fitting and KS test on his
web site; and also the Pajek web site to provide food web data
online. We also acknowledge three anonymous reviewers for advices.


\appendix

\section{A Simple Example}
\label{sec.example}

To understand the method introduced in the section
\ref{sec.methods}, let's look at a simple example (Figure
\ref{fig.example}).

\begin{figure}[htbp]
\centerline {\includegraphics{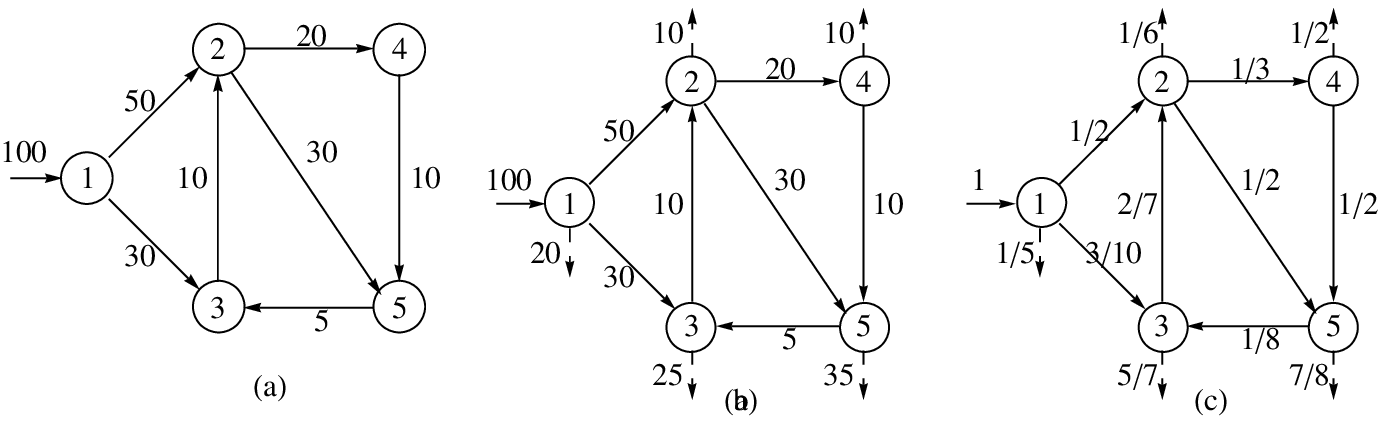}}\vskip3mm \caption{An
example to illustrate our method on deriving $A_i$s and $C_i$s
}{\small \textbf{(a)} is the original flow network; \textbf{(b)} is
the balanced network from (a), the dashed arrows are artificial
edges pointing to the sink node; \textbf{(c)} is the Markov chain
calculated according to (b).}\label{fig.example}
\end{figure}
As shown in Figure \ref{fig.example}, the balanced flow network and
the derived matrix $M$ can be obtained step by step according to the
method described in section \ref{sec.methods}. The fundamental
matrix $U$ can then be calculated for this simple network.
\begin{equation}
\label{eqn.exampleu}U = I+M+M^2+ \cdots=\left(
\begin{array}{ccccc}
 1 & 3/ 5 & 7/ 20 & 1/ 5 & 2/ 5\\
 0 & 42/ 41 & 7/82 & 14/41 & 28/41\\
 0 & 12/41 & 42/41 & 4/41 & 8/41\\
 0 & 3/164 & 21/328 & 165/164 & 21/41\\
 0 & 3/82 & 21/164 & 1/82 & 42/41
\end{array}
\right)
\end{equation}

Any entry $m_{ij}$ in the matrix $M$ is merely the probability of
one particle flowing from vertex $i$ to vertex
$j$\citep{Barber1978}. Furthermore, any entry $(i,j)$ in $M\cdot M$
represents the probability of a particle flowing from $i$ to $j$
along any path in $2$ steps. $M\cdot M\cdot M$ represents the
probabilities after $3$ steps, etc. \citep{higashi1993}. Thus, the
matrix $U$ simply takes in consideration all transfers of particles
along all possible paths.

Now, we will show how to calculate $A_i$ and $C_i$ for vertex 2.
According to the definition, $A_i$ is the total flow-through of
vertex $i$, so $A_2 = \sum_{j=1}^{6} {f_{2,j}}=60$.

Suppose many particles are flowing in the network. They will be
colored red once they flow through vertex $2$. These particles will
keep their color and flow around the whole network along all
possible pathways. Hence, the total number of red particles in the
network is just $C_2$, which is computed as,

\begin{equation}
\label{eqn.exampleC} C_2=G_2 \sum_{k=1}^5{ u_{2k}}=((100\times 3/5 +
0)/ (42/41))\sum_{k=1}^5{u_{2k}} =125
\end{equation}

where the term $G_2=\sum_{j=1}^5{f_{0j}{u_{j2}}/ {u_{22}}}$ in
equation \ref{eqn.exampleC} is the total number of new particles
which are colored red by vertex 2 in each time step.
$\sum_{j=1}^N{f_{0j}{u_{j2}}}$ is the number of particles that flow
into the system from the environment $0$ to the vertex $i$ along all
possible pathways at each time step, with $f_{0j}=(100,0,0,0,0)$ in
this example. By dividing by the term $u_{ii}$ to derive $G_i$ one
avoids double counting the red particles\citep{higashi1993}. Thus,
$C_i=G_i\sum_{k=1}^N{u_{ik}}$ is the total number of particles that
have been colored red and flow to other nodes along all possible
pathways at each time step.

If we treat the red particles flowing in the network as a metabolic
sub-system, we can calculate its allometric scaling relationship as
Garlaschelli has done for binary food webs \citep{garlaschelli2003}.
Thus, $A_i$ is the metabolism  and $C_i$ is the energy store or body
mass of the sub-system. Indeed, Garlaschelli's approach can be
recovered by our method, as shown in the next section.

\section{Comparisons to Existing Approaches}
\label{sec.comparison}

In this section, we will compare our approach to
\cite{garlaschelli2003}'s method and \cite{allesina2005}'s method.

\begin{figure}
\centerline {\includegraphics{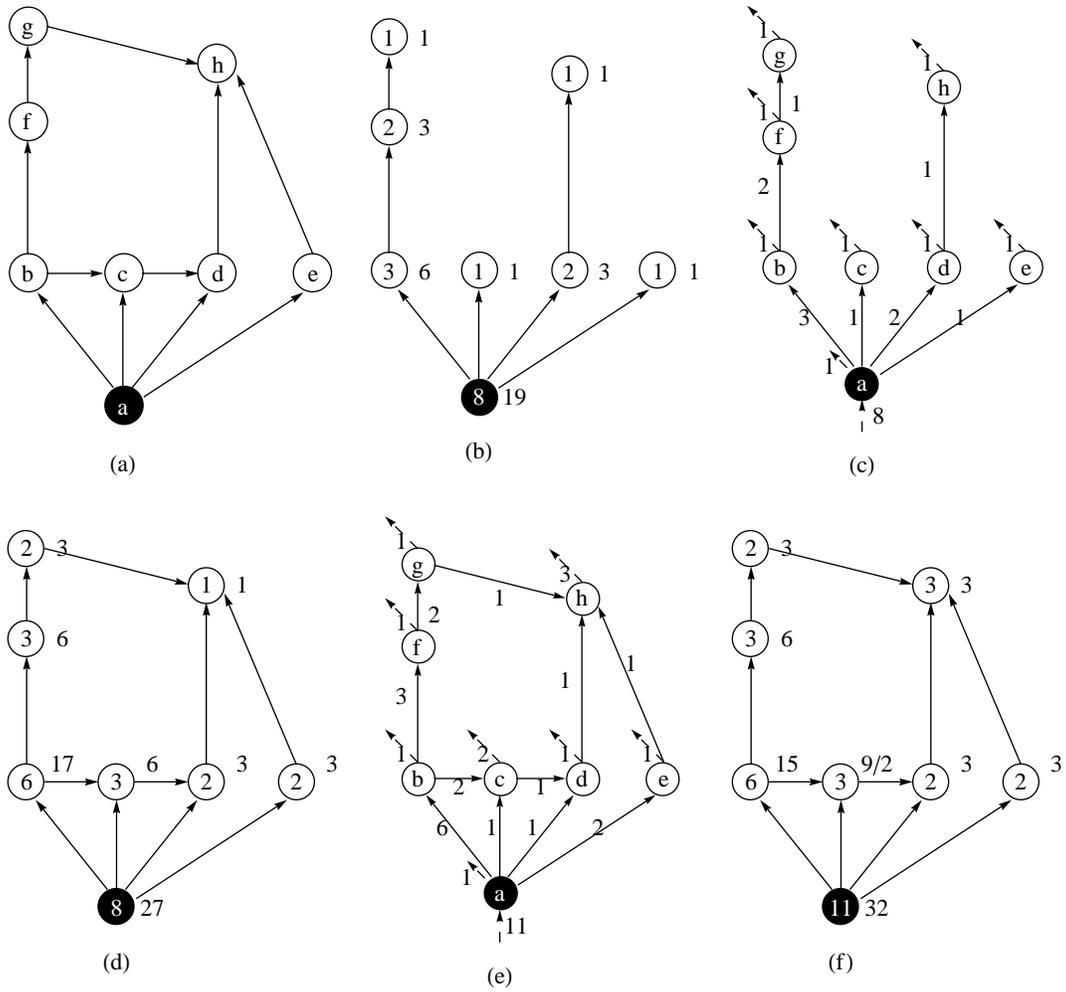}} \vskip3mm \caption{
Calculations of allometric scaling of a hypothetical food web by
various methods.}{\small {\textbf (a)} is a hypothetical food web
(The letter in each vertex is its index). The black vertex is the
root; {\textbf (b)} is a spanning tree of the original network (a).
$A_i$ and $C_i$ are denoted inside and beside vertex $i$; {\textbf
(c)} is the implicated flow network of (b), the numbers beside edges
are flux. The dashed lines are additional edges; {\textbf (d)} is a
directed acyclic graph of the original network, the numbers are
$A_i$s and $C_i$s calculated by the method of \citep{allesina2005};
{\textbf (e)} is a constructed flow network according to (d);
{\textbf (f)} is the network with numbers of $A_i$s and $C_i$s
calculated by our method according to the flow structure of
(e).}\label{fig.hfoodweb}
\end{figure}

Figure \ref{fig.hfoodweb}(a),(b) shows how Garlaschelli's approach
can be applied to a hypothetical food web to calculate $A_i$ and
$C_i$ for each vertex. At first, a spanning tree is constructed from
the original food web (Figure \ref{fig.hfoodweb} (a)) by cutting
edges. That way, each sub-tree rooted from any vertex can be viewed
as a sub-system of the spanning tree. For example, the sub-tree with
three vertices (b,f,g) rooted from the vertex b is a sub-system of
the spanning tree. $A_i$ is the total number of vertices involved in
this sub-tree and $C_i$ is the summation of $A_i$s for each vertex
in this sub-tree. Therefore, in this example, $A_b$ is 3 and $C_b$
is 6. Finally, the universal allometric scaling relationship of
$A_i$s and $C_i$s, with an exponent around $1.3$, was found for all
food webs, according to \cite{garlaschelli2003}.

Garlaschelli's method was inspired by \cite{banavar1999}'s model to
explain the Kleiber's law (See Figure \ref{fig.hfoodweb}(c)). The
spanning tree is simply Banavar's optimal transportation network.
Thus, energy flows into the whole system from the root along the
links of the network to all nodes. Suppose that each node would
consume 1 unit of energy in each time step. A flux with 1 unit
representing the energy consumption by each node should then be
added to the original spanning tree. In Figure
\ref{fig.hfoodweb}(c), the energy dissipation by each node is added
as a dotted line. As a result, $A_i$ of each node is just the total
influx of this node. $C_i$ is the total flux (the total number of
red particles colored by $i$) of the sub-tree rooted from $i$.
Essentially, calculation of allometric scalings using Garlaschelli's
approach is based on this weighted flow network model. Therefore,
our algorithm can derive the exact same values of $A_i$ and $C_i$
for the flow network (Figure \ref{fig.hfoodweb}(c)).

\cite{allesina2005} extended Garlaschelli's method. First, the
original network is converted to a directed acyclic graph (DAG) as
shown in Figure \ref{fig.hfoodweb}(d), then the sub-network
originated from vertex $i$ is identified as the set of vertices that
have at least one path from $i$. Therefore, vertices b,c,d,f,g,h
belong to the sub-network rooted from b because they are all
connected with b. $A_i$ is the number of vertices in the
sub-network, and $C_i$ is the summation of all $A_i$s in this
sub-network as shown in Figure \ref{fig.hfoodweb}(d). Because
Allesina and Bodini's approach is not based on weighted flows, it
cannot be covered by our approach.

However, we can construct a balanced flow network according to the
original network as shown in Figure \ref{fig.hfoodweb}(e). The
information of edges is added. Our approach can be applied to this
flow network to calculate $A_i$ and $C_i$ values (Figure
\ref{fig.hfoodweb}(f)). Comparing Figure \ref{fig.hfoodweb}(d) to
(f), we find their $A_i$s are almost the same, except vertices a and
h. To balance the network, more flows are added in vertex h and a,
so their $A_i$ values are larger than those in Figure
\ref{fig.hfoodweb}(d). This modification can not only influence
$A_i$ values but also $C_i$s. The vertices in networks
\ref{fig.hfoodweb}(d) and (f) therefore have different $C_i$ values.

Although our approach requires weight information, it can be
extended to more general flow networks, even those with cycles and
loops. Also, as demonstrated by our approach, $C_i$ merely means the
energy stored in the sub-system rooted from vertex $i$, which
provides much clearer and more significant ecological meaning than
previous works.

\section{A Theorem about Power Law Exponents}
\label{sec.theorem}

\begin{thm}\label{thm.exponent} Suppose $X$ and $Y$ are two random variables
following power law distributions. Their density functions,
$p(x)\thicksim x^{-\alpha}$ and $p(y)\thicksim y^{-\beta}$ hold for
any positive $x>x_0,y>y_0$, where $x_0$ and $y_0$ are lower bounds
of $X$ and $Y$. Additionally, $X$ and $Y$ satisfy a power law
relation, $Y\thicksim X^{\eta}$, then the exponents
$\alpha,\beta,\eta$ have following relationship:
\begin{equation}\label{eqn.exponentsrelation}
\eta={{(\alpha-1)}/{(\beta-1)}}
\end{equation}
\end{thm}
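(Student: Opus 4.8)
The plan is to treat the power-law relation $Y \thicksim X^{\eta}$ as a change of variables and to match the two distributional power laws through the standard transformation rule for densities. First I would write $Y = c X^{\eta}$ for some constant $c>0$ (the relation $Y\thicksim X^\eta$ is exactly this up to the proportionality constant, which will drop out of the exponent bookkeeping). Then, for $x>x_0$ large enough that both tail power laws are in force, the density of $Y$ induced by that of $X$ satisfies
\begin{equation}
p_Y(y)\,\left|\frac{dy}{dx}\right| = p_X(x),
\end{equation}
so that $p_Y(y) = p_X(x)\,\bigl|dy/dx\bigr|^{-1}$. With $y = cx^\eta$ we have $dy/dx = c\eta x^{\eta-1}$, hence $p_Y(y)\thicksim x^{-\alpha}\cdot x^{-(\eta-1)} = x^{-\alpha-\eta+1}$.

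Next I would re-express the right-hand side in terms of $y$ rather than $x$. Since $x \thicksim y^{1/\eta}$, we get $p_Y(y)\thicksim y^{(1-\alpha-\eta+1)/\eta} \cdot(\text{const}) = y^{(2-\alpha-\eta)/\eta}$. On the other hand, by hypothesis $p_Y(y)\thicksim y^{-\beta}$ on the corresponding tail. Matching the two exponents of $y$ gives
\begin{equation}
-\beta = \frac{2-\alpha-\eta}{\eta},
\end{equation}
and solving for $\eta$ yields $-\beta\eta = 2-\alpha-\eta$, i.e. $\eta - \beta\eta = 2 - \alpha$, i.e. $\eta(1-\beta) = -( \alpha-1)$, which rearranges to $\eta = (\alpha-1)/(\beta-1)$. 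This is precisely equation \eqref{eqn.exponentsrelation}, so the chain of equalities closes the argument.

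The main obstacle — and the step that needs the most care — is the legitimacy of the change-of-variables step itself. The relation $Y\thicksim X^\eta$ as used in the empirical sections is a statistical (regression) relationship with scatter, not a deterministic functional identity, so strictly speaking $p_Y$ is not obtained from $p_X$ by a clean pushforward. To make the derivation rigorous I would either (i) assume the deterministic idealization $Y=cX^\eta$ holds in the tail (which is the spirit in which the theorem is stated — "satisfy a power law relation"), and note that the density-transformation rule then applies verbatim on $(x_0,\infty)$ with $x_0$ chosen large enough; or (ii) phrase everything in terms of the tail cumulative distributions $P(X>x)\thicksim x^{1-\alpha}$ and $P(Y>y)\thicksim y^{1-\beta}$, using the monotone relation $Y=cX^\eta$ to write $P(Y>y) = P(X > (y/c)^{1/\eta})\thicksim (y^{1/\eta})^{1-\alpha} = y^{(1-\alpha)/\eta}$ and matching $(1-\alpha)/\eta = 1-\beta$. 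Route (ii) is cleaner because it avoids differentiating and only uses monotonicity of $x\mapsto cx^\eta$; it gives the same identity $\eta=(\alpha-1)/(\beta-1)$ in one line, so I would present that version and remark that the density-based computation is an equivalent alternative. A minor secondary point to state explicitly is that we need $\eta>0$ (so the map is increasing) and $\beta>1$ (so $\eta$ is finite and positive), both of which hold for the empirical food webs.
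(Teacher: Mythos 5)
Your route (ii) --- writing $P(Y>y)=P\{X>(y/c)^{1/\eta}\}\sim y^{(1-\alpha)/\eta}$ and matching this against $P(Y>y)\sim y^{1-\beta}$ --- is exactly the paper's argument (the paper carries out the same computation as an explicit change of variable $t=kx^{\eta}$ inside the tail integral), and the final identity is correct. One caution about your density-based route (i): after substituting $x\sim y^{1/\eta}$ into $p_Y(y)\sim x^{1-\alpha-\eta}$ the exponent should be $(1-\alpha-\eta)/\eta$, not $(2-\alpha-\eta)/\eta$ (you introduced an extra $+1$); as written, matching $-\beta=(2-\alpha-\eta)/\eta$ actually yields $\eta=(\alpha-2)/(\beta-1)$, and you only recover the stated result because a second algebra slip (rewriting $2-\alpha$ as $-(\alpha-1)$) cancels the first. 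Since you propose presenting route (ii) as the main argument the proof stands, but the density computation should be corrected if it is kept as a remark.
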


\begin{proof}
Because $X$ and $Y$ follow power law distributions,
$p(x)=cx^{-\alpha}$, where $c$ is a constant that satisfies the
normalization condition. And $Y= k X^{\eta}$, where $k$ is a
constant, so, for any $y>k x_0^{\eta}$,
\begin{equation}
\label{eqn.proof1}
P\{Y>y\}=P\{kX^{\eta}>y\}=P\{X>({y/k})^{1/\eta}\}=\int_{({y/k})^{1/\eta}}^{+\infty}cx^{-\alpha}dx
\end{equation}
Let $t=kx^\eta$, so $x=(t/k)^{1/\eta}$, $dx={(1/
\eta)}k^{-1/\eta}t^{(1/\eta)-1}dt$. Take it into equation
\ref{eqn.proof1},
\begin{equation}
\label{eqn.proof2} P\{Y>y\}=\int_{y}^{+\infty}(ck^{(\alpha-1)/\eta}/
\eta) t^{{(1-\alpha)/\eta} -1}dt=\int_{y}^{+\infty} k'
t^{{(1-\alpha)/ \eta}-1}dt
\end{equation}
where $k'=ck^{(\alpha-1)/\eta}/ \eta$ is a constant. Because $Y$
follows power law distribution,
\begin{equation}
\label{eqn.proof3} P\{Y>y\}=\int_{y}^{+\infty}c'y^{-\beta}dy
\end{equation}
Compare equation \ref{eqn.proof2} with equation \ref{eqn.proof3}, we
know that,
\begin{equation}
-\beta={(1-\alpha)/\eta} -1
\end{equation}
Finally,
\begin{equation}
\eta={(\alpha-1)/(\beta-1)}
\end{equation}
\end{proof}

\bibliographystyle{elsarticle-harv}
\bibliography{ecology}

\begin{thebibliography}{51}
\expandafter\ifx\csname natexlab\endcsname\relax\def\natexlab#1{#1}\fi
\expandafter\ifx\csname url\endcsname\relax
  \def\url#1{\texttt{#1}}\fi
\expandafter\ifx\csname urlprefix\endcsname\relax\def\urlprefix{URL }\fi

\bibitem[{Albert and Barabasi(2002)}]{AlbertBarabasi2002}
Albert, R., Barabasi, A., 2002. Statistical mechanics of complex networks. Rev.
  Mod. Phy. 74, 47--97.

\bibitem[{Allesina and Bodini(2005)}]{allesina2005}
Allesina, S., Bodini, A., 2005. Food web networks: Scaling relation revisited.
  Ecol. Complex. 2, 323--338.

\bibitem[{Almaas et~al.(2004)Almaas, Kov¨¢cs, Vicsek, Oltvai, and
  Barabasi}]{Almaas2004}
Almaas, E., Kov¨¢cs, B., Vicsek, T., Oltvai, Z., Barabasi, A., 2004. Global
  organization of metabolic fluxes in the bacterium escherichia coli. Nature
  427, 839--843.

\bibitem[{Baird and Ulanowicz(1989)}]{Baird1989}
Baird, D., Ulanowicz, R.~E., 1989. The seasonal dynamics of {C}hesapeake bay
  ecosystem. Ecol. Monogr. 59, 329--364.

\bibitem[{Banavar et~al.(1999)Banavar, Maritan, and Rinaldo}]{banavar1999}
Banavar, J., Maritan, A., Rinaldo, A., 1999. Size and form in efficient
  transportation networks. Nature 399, 130--132.

\bibitem[{Barber(1978)}]{Barber1978}
Barber, M., 1978. A markovian model for ecosystem flow analysis. Ecol. Model.
  5, 193--206.

\bibitem[{Barrat et~al.(2004)Barrat, Barthelemy, Pastor-Satorras, and
  Vespignani}]{Barrat2004}
Barrat, A., Barthelemy, M., Pastor-Satorras, R., Vespignani, A., 2004. The
  architecture of complex weighted networks. Proc. Natl. Acad. Sci. U. S. A.
  101, 3747--3752.

\bibitem[{Berlow et~al.(2004)Berlow, Neutel, Cohen, De~Ruiter, Ebenman,
  Emmerson, Fox, Jansen, Jones, Kokkoris, Logofect, Mckane, Montoya, and
  Petchey}]{Berlow2004}
Berlow, E., Neutel, A., Cohen, J., De~Ruiter, P., Ebenman, B., Emmerson, M.,
  Fox, J., Jansen, V., Jones, J., Kokkoris, G., Logofect, D., Mckane, A.,
  Montoya, J.~M., Petchey, O., 2004. Interaction strengths in food webs: issues
  and opportunities. J. Anim. Ecol. 73, 585--598.

\bibitem[{Brown(2004)}]{brown2004}
Brown, J., 2004. Toward a metabolic theory of ecology. Ecology 85~(7),
  1771--1789.

\bibitem[{Chen et~al.(2009)Chen, Wang, and Wang}]{ChenWang2009}
Chen, Q., Wang, C., Wang, Y., 2009. Deformed zipf's law in personal donation.
  Europhys. Lett. 88, 38001.

\bibitem[{Clauset et~al.(2007)Clauset, Shalizi, and Newman}]{Clauset2007}
Clauset, A., Shalizi, C.~R., Newman, M. E.~J., Jun 2007. Power-law
  distributions in empirical data.
\newline\urlprefix\url{http://arxiv.org/abs/0706.1062}

\bibitem[{Clementi et~al.(2006)Clementi, Matteo, and Gallegati}]{Clementi2006}
Clementi, F., Matteo, T.~D., Gallegati, M., 2006. The power-law tail exponent
  of income distributions. Physica A 370, 49--53.

\bibitem[{Cohen et~al.(2003)Cohen, Jonsson, and Carpenter}]{Cohen2003}
Cohen, J., Jonsson, T., Carpenter, S., 2003. Ecological community description
  using the food web, species abundance, and body size. Proc. Natl. Acad. Sci.
  U. S. A. 100~(4), 1781--1786.

\bibitem[{Duan(2007)}]{Duan2007}
Duan, W., 2007. Universal scaling behavior in weighted trade networks. Eur.
  Phys. J. B. 59, 271--276.

\bibitem[{Dunne et~al.(2002)Dunne, Williams, and Martinez}]{Dunne2002}
Dunne, J.~A., Williams, R.~J., Martinez, N.~D., 2002. Food-web structure and
  network theory: The role of connectance and size. Proc. Natl. Acad. Sci. U.
  S. A. 99~(20), 12917--12922.

\bibitem[{Emmerson and Raffaelli(2004)}]{Emmerson2004}
Emmerson, M., Raffaelli, D., 2004. Predator-prey body size, interaction
  strength and the stability of a real food web. J. Anim. Ecol. 73, 399--409.

\bibitem[{Fath and B.C.(1998)}]{Fath1998}
Fath, B., B.C., P., 1998. Network synergism: emergence of positive relations in
  ecological systems. Ecol. Model. 107, 127--143.

\bibitem[{Fath and Patten(1999)}]{Fath1999}
Fath, B., Patten, B., 1999. Review of the foundations of network environ
  analysis. Ecosystems 2, 167--179.

\bibitem[{Fath et~al.(2001)Fath, Patten, and Choi}]{Fath2001}
Fath, B., Patten, B., Choi, J.~S., 2001. Complementarity of ecological goal
  functions. J. Theor. Biol. 208, 493--506.

\bibitem[{Finn(1976)}]{Finn1976}
Finn, J., 1976. Measures of ecosystem structure and function derived from
  analysis of flows. J. Theor. Biol. 56, 363--380.

\bibitem[{Garlaschelli et~al.(2003)Garlaschelli, Caldarelli, and
  Pietronero}]{garlaschelli2003}
Garlaschelli, D., Caldarelli, G., Pietronero, L., 2003. Universal scaling
  relations in food webs. Nature 423, 165--168.

\bibitem[{Goldstein et~al.(2004)Goldstein, Morris, and Yen}]{Goldstein2004}
Goldstein, M.~L., Morris, S.~A., Yen, G.~G., 2004. Problems with fitting to the
  power-law distribution. Eur. Phys. J. B. 41, 255--258.

\bibitem[{Hannon(1973)}]{Hannon1973}
Hannon, B., 1973. The structure of ecosystems. J. Theor. Biol. 41, 535--546.

\bibitem[{Hannon(1986)}]{Hannon1986}
Hannon, B., 1986. Ecosystem control theory. J. Theor. Biol. 121, 417--437.

\bibitem[{Herrada et~al.(2008)Herrada, Tessone, Klemm, Eguiluz,
  Hernandez-Garcia, and Duarte}]{Herrada2008}
Herrada, E., Tessone, C., Klemm, K., Eguiluz, V., Hernandez-Garcia, E., Duarte,
  C., 2008. Universal scaling in the branching of the tree of life. Plos one 3,
  e2757.

\bibitem[{Higashi(1986)}]{Higashi1986}
Higashi, M., 1986. Extended input-output flow analysis of ecosystems. Ecol.
  Model. 32, 137--147.

\bibitem[{Higashi et~al.(1993)Higashi, Patten, and Burns}]{higashi1993}
Higashi, M., Patten, B., Burns, T., 1993. Network trophic dynamics: the modes
  of energy utilization in ecosystems. Ecol. Model. 66, 1--42.

\bibitem[{Kleiber(1932)}]{kleiber1932}
Kleiber, M., 1932. Body size and metabolism. Hilgardia 6, 315--353.

\bibitem[{Leontief(1951)}]{leontief1951}
Leontief, W., 1951. The Structure of the American Economy, 1919-1939. New York:
  Oxford University Press.

\bibitem[{Leontief(1966)}]{Leontief1966}
Leontief, W., 1966. Input-Output Economics. Oxford University Press.

\bibitem[{Levine(1980)}]{levine1980}
Levine, S., 1980. Several measures of trophic structure applicable to complex
  food webs. J. Theor. Biol. 83~(2), 195--207.

\bibitem[{McCann et~al.(1998)McCann, Hatings, and Huxel}]{McCann1998}
McCann, K., Hatings, A., Huxel, G., 1998. Weak trophic interactions and the
  balance of nature. Nature 395, 794--798.

\bibitem[{Montis et~al.(2007)Montis, Barth¨¦lemy, Chessa, and
  Vespignani}]{Montis2007}
Montis, A.~D., Barth¨¦lemy, M., Chessa, A., Vespignani, A., 2007. The structure
  of interurban traffic: a weighted network analysis. Environ. Plann. B. 34,
  905--924.

\bibitem[{Montoya and Sole(2002)}]{Montoya2002}
Montoya, J., Sole, R., 2002. Small world patterns in food webs. J. Theor. Biol.
  214, 405--412.

\bibitem[{Newman(2005)}]{Newman2005}
Newman, M. E.~J., 2005. Power laws, pareto distributions and zipf's law.
  Contemporary Physics 46, 323--351.

\bibitem[{Noether(1967)}]{Noether1967}
Noether, G., 1967. Elements of Nonparametric Statistics. John Wiley and Sons,
  New York.

\bibitem[{Odum(1988)}]{odum1988}
Odum, H., 1988. Self-organization, transformity, and information. Science 242,
  1132--1139.

\bibitem[{Patten(1981)}]{Patten1981}
Patten, B., 1981. Environs: the superniches of ecosystems. American Zoologist
  21, 845--852.

\bibitem[{Patten(1982)}]{Patten1982}
Patten, B., 1982. Environs: relativistic elementary particles or ecology.
  American Natualist 119, 179--219.

\bibitem[{Patten(1985)}]{Patten1985}
Patten, B., 1985. Energy cycling in the ecosystem. Ecol. Model. 28, 1--71.

\bibitem[{Rousseau and Rousseau(2000)}]{Rousseau2000}
Rousseau, B., Rousseau, R., 2000. Lotka: A program to fit a power law
  distribution to observed frequency data. International Journal of
  Scientometrics, Informetrics and Bibliometrics 4, 1--6.

\bibitem[{Szyrmer and Ulanowicz(1987)}]{Szyrmer1987}
Szyrmer, J., Ulanowicz, R., 1987. Total flows in ecosystems. Ecol. Model. 35,
  123--136.

\bibitem[{Ulanowicz(1986)}]{Ulanowicz1986}
Ulanowicz, R., 1986. Growth and Development, Ecosystems Phemomenology.
  Springer-Verlag, New York.

\bibitem[{Ulanowicz(2004)}]{ulanowicz2004}
Ulanowicz, R., 2004. Quantitative methods for ecological network analysis.
  Comput. Biol. Chem. 28, 321--339.

\bibitem[{Ulanowicz and Wolff.(1990)}]{Ulanowicz1990}
Ulanowicz, R., Wolff., W., 1990. Ecosystem flow networks: Loaded dice? Math.
  Biosci. 103, 45--68.

\bibitem[{Ulanowicz(1997)}]{Ulanowicz1997b}
Ulanowicz, R.~E., 1997. Ecology, the Ascendent Perspective. Columbia University
  Press, New York.

\bibitem[{Waldrop(1992)}]{Waldrop1992}
Waldrop, M., 1992. Complexity: The Emerging Science at the Edge of Order and
  Chaos. Tocuhstone, New York.

\bibitem[{Watts and Strogatz(1998)}]{watts1998}
Watts, D., Strogatz, S., 1998. Collective dynamics of 'small-world' networks.
  Nature 393, 409--410.

\bibitem[{West et~al.(1997)West, Brown, and Enquist}]{west1997}
West, G., Brown, J., Enquist, B., 1997. A general model for the origin of
  allometric scaling laws in biology. Science 276, 122--126.

\bibitem[{Williams et~al.(2002)Williams, Berlow, Dunne, Barabasi, and
  Martinez}]{Williams2002}
Williams, R.~J., Berlow, E.~L., Dunne, J.~A., Barabasi, A., Martinez, N.~D.,
  2002. Two degrees of separation in complex food webs. Proc. Natl. Acad. Sci.
  U. S. A. 99~(20), 12913--12916.

\bibitem[{Wootton(2002)}]{Wootton2002}
Wootton, J., 2002. Indirect effects in complex ecosystems: recent progress and
  future challenges. J. Sea Res. 48, 157--172.

\end{thebibliography}

\end{document}